\definecolor{DarkBlueLinks}{RGB}{5,32,144} 
\newtheorem{theorem}{Theorem}
\newtheorem{lemma}[theorem]{Lemma}
\theoremstyle{definition}
\newtheorem{definition}{Definition}
\theoremstyle{assumption}
\theoremstyle{remark}
\renewenvironment{proof}{\medbreak\noindent{\em Proof.}}{\ $\Box$\medbreak}
\newenvironment{proof-sketch}{\medbreak\noindent{\em Proof (sketch).}}{\ $\Box$\medbreak}
\newenvironment{revpar}{%
\begin{list}%
{}{\setlength{\labelwidth}{0in}%
\setlength{\itemsep}{0in}%
\setlength{\topsep}{0in}%
\setlength{\leftmargin}{3em}%
\setlength{\itemindent}{-2em}}}{\end{list}}
\begin{document}
\thispagestyle{empty} 

\title{Wendy, the Good Little Fairness Widget}
\author{Klaus Kursawe (klaus@vega.xyz)\\ Version 1.01}

\maketitle
\pagestyle{plain} 

\begin{abstract}
The advent of decentralized trading markets introduces a number of new challenges for consensus protocols. In addition to the `usual' attacks -- a subset of the validators trying to prevent disagreement --
there is now the possibility of financial fraud, which can abuse properties not normally considered critical in consensus protocols. We investigate the issues of attackers manipulating or exploiting the order in 
which transactions are scheduled in the blockchain. More concretely, we look into {\em relative order fairness}, i.e., ways we can assure that the relative order of transactions is fair.  We show that one of the more
intuitive definitions of fairness is impossible to achieve. We then present Wendy, a group of low overhead  protocols that can implement different concepts of fairness. Wendy acts as an aditional widget for an existing blockchain, and is largely agnostic to the underlying blockchain and its security assumptions. Furthermore, it is possible to apply a the protocol only for a subset of the transactions, and thus run several independent fair markets on the same chain. 

\end{abstract}


\section{Introduction}
\label{Introduction}

The advent of decentralized trading markets introduces a number of new challenges for consensus protocols~\cite{DBLP:journals/corr/abs-1904-05234,VegaWhitepaper}. Classically, consensus layer protocols only are required to maintain consistency of
the blockchain. While additional requirements have been investigated in the past -- for example causal order or censorship resilience -- very little attention has been given to the fairness
of the order of events, making it possible to execute frontrunning or rushing attacks. While some blockchains attempt to make such attacks harder, for example by using a randomized leader election protocol, 
others can be easily manipulated by a single corrupt validator or a well targeted denial of service attack. In addition to allowing questionable behavior, this can also be a potential regulatory issue, if
exchange are required to prevent some levels of fraud.

In this paper,  we investigate the issues of attackers manipulating or exploiting the order in 
which transactions are scheduled in the blockchain. More concretely, we look int {\em relative order fairness}, i.e., ways we can assure that the relative order of transactions is fair.  We show that one of the more
intuitive definitions of fairness is impossible to achieve, and present several alternatives.

Our approach integrates with existing blockchains without any change or non-standard assumption on the blockchain implementation -- the only requirement is that there is some known set of parties (resp. validators) through 
which fairness is defined. This allows us to combine several variations of fairness with different blockchains, have different degrees of fairness for sets of transactions running on the same blockchain, and even change the configuration
on the fly without needing to break the chain.
This setup can also  come especially handy if one wants to formally verify the protocols -- it is vital here to have the small, independently verifiable components and to not need to formally verify dozens of variations of the same protocol (a glimpse at the difficulty of formally verifying consensus protocols can be found in
~\cite{Kwiatkowska02verifyingrandomized}).

\section{Model and Architecture}
We assume a two-fold model. For one, there is an underlying blockchain that takes blocks as an input and produces a distributed ledger of blocks. For the purpose of the fairness
widget, we do not require any assumptions on the blockchain regarding participants, timing, or finality. What we do require is that the blockchain has some form of validity
function that evaluates if a block is valid, and that can include the validity conditiond for the fairness widget. We also assume that all validators that can propose new
blocks for the blockchain that include fairness-relevant transactions are known and can receive a broadcast from the validators participating in the pre-protocol.

There is no requirement that all blocks in the blockchain are subjected to realtive fairness. If, for example, Ethereum was the underlying blockchain, the fairness protocol
could be required for all transactions touching some specific smart contract, without putting any requirements onto other transactions. To this end, all transactions
that reuqire fairness contain a fairness-label, and only transactions with the same label need to be fair with respect to each other. Similarily, not all validators need to 
participate - it is possible that only a subset of the validators propose fairness relevant blocks, though this would slow down the fair transactions.

The fairness preprotocol itselfs requires a more strict model. 
Our model extends the system model and definitions of Cachin, Kursawe, Petzold, and Shoup~\cite{CKPS01}. Thus, we assume that the number of byzantine corrupted parties is just less than a third of all parties (i.e., $n=3t+1$), though we will also show how to expand the approach to a more flexible model~\cite{Kursawe05byzantinefault}.  These could be a subset
of the validators of the underlying blockchain, or a completely independent set of parties. We work in the fully asynchronous model, i.e., we assume that an attacker has complete control over the time and order of message delivery,
 but is not allowed to completely drop a message. Furthermore, we assume that messages are authenticated, and that all participants can sign messages as well as verify each other's signatures.
In addition to the classical byzantine nodes,  we also assume
rogue traders might try to game the system to get an unfair advantage, especially to get ahead on 
performing a transaction. These traders can collaborate with any amount of other traders as well up to a third of the validators; in fact, formally we assume that all traders are
under control by the adversary.

The validators receive external {\em requests} from the traders. We make no assumption
on the timing or the order, which is under complete control from the adversary. The blockchain protocol then {\em delivers} the requests, i.e., puts them into a block while satisfying
the basic properties of atomic broadcast. In practice, to optimize bandwidth, the protocol would likely not use the requests themselves, but hashes thereof. For the sake of presentation,
we will use the term request even when a hash would be sufficient. Messages are send by a simple multicast with no requirements on consistency or safety. While there might be some
room for optimization if intelligent gossiping protocols are used, our only requirement for the communication layer is that messages between honest parties eventually arrive.
An alternative model in the literature is the GST model going back to Dwork, Lynch, and Stockmeier~\cite{10.1145/42282.42283}, which in some interpretations does allow for some message loss.
 In this model, the adversary is allowed to arbitrarily delay or drop messages until a time
called the {\em global stabilization time}, after which she needs to deliver all messages within a known timeout. In this model, protocols essentially try to not violate safety before GST, 
and then assure liveness after. While we don't model our protocols in this setting, they work in it just as well as long as lost messages are resend.

As mentioned above, the goal of our design is not to build a new blockchain that includes fairness, but to build a module that can be added to existing blockchain designs.
 To this end, we provide
a pre-protocol that is run by the validators in parallel to the actual blockchain. The pre-protocol outputs valid blocks that assure relative order fairness. While these blocks can
be generated by every validator,  in most consensus  implementations, blocks are proposed by only one or very few parties. To this end, we define a set of designated leader(s) which
execute the part of the protocol that generates blocks. The leader part does not involve any communication though, and thus could be executed by every
participant without additional communication effort. In addition, we need to modify the block-validity function -- proposed blocks are not valid unless it is also verified
that the fairness conditions have been satisfied. 
To be able to use more established formal definitions, we assume that our protocol communicates with an atomic broadcast subprotocol; for all practical purposes, this
is equivalent to a blockchain in our context. We make no assumptions on how the underlying atomic broadcast protocol is implemented,
and what -- if any -- timing assumptions it uses. In fact, our preprotocol can work in a completely different model than the underlying blockchain -- while our
model has a voting/quorum based approach in mind, the blocks generated by the fairness pre-protocol can es well be processed by a Nakamoto style implementation
such as Ethereum or Ouroboros, not unlike the approach that Casper is taking to add finality~\cite{DBLP:journals/corr/abs-1710-09437}.
 We do, however, assume that the participants in the fairness protocol know and recognize each other. While it
thus would be logical to assume the same for the blockchain protocol, this is not strictly necessary -- it is  possible to use the fairness protocols presented here
to add relative order fairness to (some) bitcoin transactions, as long as it is possible to enforce our new validity condition for that chain and assure the the underlying blockchain only accepts
the blocks we generate in the order we generated them.

As we envision a blockchain that handles a diversity of transactions, relative order fairness only needs to be assured for subsets - it is not necessarily required that a request related to
technology a stock market is treated fairly with respect to a request related to crop prices in Australia. Thus, every transaction has a market-identifier $m_{id}$, and only transactions
that have the same market-identifier need to be fair with respect to each other. As we provide different fairness models, it is also possible to use different fairness pre-protocols for
different markets. There is even a possibility that a single request has several market identifiers and thus is delivered in a relatively fair way with respect to several, otherwise independent
markets. The main issue with this model is that it adds quite some complexity if we want to have different fairness protocols for different markets. While there is no fundamental issue with
this, we do not include this property for our protocols in this paper for the sake of (relative) simplicity.

\subsection{Related Work}
The only work we are aware of that looks at relative fairness is parallel work from Kelkhar, Zhang, Goldfeder, and Jules~\cite{cryptoeprint:2020:269}. They also identify the impossibility of 
strict fairness  and resolve to address block fairness. While our approach is to weaken the fairness condition to circumvent the impossibility of block fairness, they define a concept of
weak liveness wile maintaining the stronger fairness condition to this end, and define a set of protocols (both synchronous and asynchronous) to provide relative- or order block fairness. The price for the stronger fairness
is that there is no limit on when requests are delivered or how big a block becomes, though the protocols could easily be adapted  to one of our models. Their approach also differs in the architecture - 
while we aim to have a module to be combined with existing atomic broadcast protocols, their work presents a full protocol for $ n > 4t$. 

The concept of {\em causality} in state machine replication was first introduced by Birman and Reiter~\cite{10.1145/177492.177745}, with the example of preventing 
stock trading fraud. The definition was later refined by Cachin, Kursawe, Petzold, and Shoup~\cite{CKPS01}, and again by Duan, Reiter, and Zhang~\cite{8023111}.
While the details in the definitions do matter for meaningful proofs and avoiding less straightforward attacks, the basic idea of these definitions is the same; 
a message is processed by the protocols in a way that its position in the ordering is fixed before any participant learns of its content. 
While this is sufficient to prevent some financial fraud -- especially if we also allow the sender of a request to remain anonymous until the transaction is scheduled -- the
protection offered by commit and reveal is not sufficient. Especially in cases of high volatility, traders can still get an advantage if they can schedule transactions faster than their peers.   

The notion of {\em fairness} has been used in different contexts in the literature. In the context of block delivery, the concept was formally introduced in~\cite{CKPS01}, though some
extend of fairness is already provided by earlier protocols such as Castro and Liskovs BFT protocol~\cite{DBLP:conf/osdi/CastroL99}. In this definition, fairness essentially requires that a 
blockchain is fair if the time between $t+1$ honest parties being aware of a request and that request being delivered is bounded. 
This concept is somewhat similar (and sometimes used as a synonym) to {\em censorship resilience}~\cite{Miller:2016:HBB:2976749.2978399}, though that term as well has now taken on a multitude of meanings in the literature, 
and usually does not rule out an unfair delay in delivering a request. In terms of {relative order fairness}, fair protocols at least give an upper bound on the level of unfairness -- while it is
possible that requests are processed in a different order than they arrived, the number of requests that can rush ahead of a particular request is limited.
In~\cite{DBLP:journals/corr/abs-1906-03819}, a different fairness definition is defined -- here, fairness requires that
all validators get an equal opportunity to get their transactions into the blockchain. This is a different model than we assume, as we want to achieve fairness for transactrions comming from external participants, while this protocol
assures fairness between the validators. There is some relation though, as fairness between validators assures that
the dishonest validators cannot dominate the blockchain, and thus requests seen by all honest validators
are processed somewhat fast.


The proof-of-work model has a different approach to fairness. Essentially, if the majority of miners are honest, and the number of transactions is smaller than the maximum the network can handle, 
the probability that some winning miner will process a given transaction soon is relatively high (though there is no strict upper bound). This effect is diluted by an economic argument though -- 
if (as the case in Ethereum and Bitcoin) it is possible to pay miners for preferred treatment, the delay until a particular request is delivered can become fairly high. In terms of relative fairness,
this feature makes the blockchains unfair by design -- it is explicitly build in that clients who pay more can get preferred treatment.




Some of the more recent protocols~\cite{DBLP:journals/corr/abs-1807-04938,DBLP:journals/corr/abs-1803-05069} frequently exchange the leader even in the absence of observable misbehavior. This makes it harder for an attacker to impose controlled unfairness, as it is harder to assure a corrupted validator is in charge of scheduling when the adversary needs it, though it might be possible to remove the honest leader with a limited denial of
service attack. An additional countermeasure is to choose the next leader randomly, decreasing another level of control of the adversary. Fully randomized protocols~\cite{CKPS01,Miller:2016:HBB:2976749.2978399} also make it
harder for an attacker to control the level of unfairness. Nevertheless, an attacker can still cause unfairness to a large extent, and -- while the unfairness is harder to control -- the protocols are not necessarily {\em relative fair}, 
i.e., preserving the order in which requests come when delivering them.

\section{Relative Fairness}

The term fairness has found numerous definitions in the atomic broadcast and blockchain literature. The most relevant definition
of (absolute) fairness for our context requires one of the following:
\begin{itemize}
 \item every request eventually gets scheduled
 \item every request gets scheduled within a bounded time or number of implementation related messages
\end {itemize}
Additional constraints depend on the model used, e.g., requests only need to be scheduled within a bounded time after GST (Global Stabilisation Time)~\cite{10.1145/42282.42283}.

For many consensus protocols, satisfying this definition of fairness does not come naturally. Especially for leader-based protocols, a leader can easily suppress a message. There are a number of countermeasures against this. In~\cite{DBLP:conf/osdi/CastroL99, DBLP:conf/icalp/KursaweS05}, replicas watch a leader and dispose of them if they are dishonest; other protocols\cite{DBLP:journals/corr/abs-1803-05069,DBLP:conf/netys/Amoussou-Guenou19} change the leader frequently, in the hope that an honest leader will eventually handle all outstanding requests. With the exception of~\cite{DBLP:conf/icalp/KursaweS05}, no protocol can give strong bounds on when a message is actually scheduled -- the time until a message gets scheduled depends on the accuracy of the timing assumptions (or the arrival of GST) and is thus dependent on an out-of-protocol factor.
Leaderless protocols~\cite{CKPS01,Miller:2016:HBB:2976749.2978399} tend to have better implicit fairness protection; while they tend be a little slower than leader based ones (at least in a well-behaved network), the decreased effort to assure fairness can give those protocols an edge in a trading blockchain.

As we are anyhow sorting transactions into blocks (this comes rather natural for a blockchain), though it is possible to
use logical blocks that encompass several blockchain blocks. In addition to relative fairness, this also assures fairness as 
defined above. The pre-protocol each party would follow looks as follows (unoptimized version, basing on a leader based
atomic broadcast protocol for simplicity):

\begin{definition}[Block Fairness]
After a request has been seen by $n-t$ honest parties, it will be scheduled in the next block; if it hasn't been seen by at least one honest party, it will not be scheduled in the next block.
\end{definition}

This is relatively easy to implement -- before the ordering protocol starts, every validator sends  around a list of all requests they have seen; a valid proposal for a block then consists of the transactions out of $n-t$ of these sets that got $t+1$ votes. 

In the setting we envision for our blockchain, even the stronger definition of fairness is insufficient. In addition to the requirements of absolute fainress, we also want {\em relative fairness}, which more captures the intuitive meaning of the word -- if one request is send before another request, it would be fair if it is also scheduled first.

\begin{definition}[Relative Fairness]
A byzantine fault tolerant total ordering protocol is called {\em relatively fair} if the following holds: If all honest parties receive request $r_1$ before request $r_2$, then $r_1$ is delivered before $r_2$.
\end{definition}
Unfortunately, we can show that this definition of fairness is not only impossible, but inherently contradictory even if only one party is corrupt.

\begin{proof-sketch}
Suppose we have $n$ parties $P_1$, ..., $P_n$, and $n$ requests $r_1$, ..., $r_n$.
Then let $P_i$ get the requests in the order $r_i,r_{i+1},r_{i+2}$ $,..., r_n, r_1,r_2,...,r_{i-1}$.
Now for every $j$, the only party that sees $r_j$ before $r_{j-1}$ is party $P_j$; all other parties
see $Pr_{j-1}$ before $r_j$; also, $P_1$ is the only party that sees $r_1$ before $r_n$.

If all parties are honest, then there is no dedicated message order -- no two requests will have been seen
in the same order by all honest parties. However, if any party $j$ is dishonest, then $r_j$ must be scheduled
after $r_{j-1}$, as $P_j$ is the only party to see $r_j$ before $r_{j-1}$ (if $P_1$ is dishonest, $r_n$ must be scheduled before $r_1$). 

As the honest parties following the protocol do not know who is dishonest, the outcome of the 
ordering protocol must be correct independently of which party is dishonest. Thus, for all $i$,$r_i$ must be scheduled before $r_{i+1}$ as well as $r_n$ before $r_1$, which is a contradiction.
\end{proof-sketch}

One way out would be to only require $r_2$ and $r_1$ to be in the same block. However, even that might not be possible, and there is another weakness in this definition: The corrupt parties might see $r_2$ long before any honest party would see $r_1$, thus our protocol essentially can't schedule anything seen by $t$ parties only; it seems hardly fair if $t$ validators cannot get a message scheduled that every client can schedule.  We leave it to further work to find further definitions for relative fairness that are efficiently achievable and might serve some usecases better.

\begin{definition}[Relative Fairness, 2. attempt]
A byzantine fault tolerant total ordering protocol is called {\em relatively fair} if the following holds: If all honest parties receive request 
$r_1$  before request $r_2$, then $r_1$ is delivered in the same block as $r_2$ or earlier.
\end{definition}

Unfortunately, we can show that this is also impossible:

\begin{proof-sketch}
In above proof, we have shown that there exists a schedule in which the required order of messages depends on which party is
faulty, thus requiring to take into account a parameter that is not known to an honest party. In this proof, we build on that construct
to design a schedule that would create a block of unlimited size.

For this outline, we assume $n=4$ and $t=1$. Consider two schedules as used above, i.e., 
~\\
$P_1$: $m_1$, $m_2$, $m_3$,$m_4$\\
$P_2$: $m_2$, $m_3$, $m_4$,$m_1$\\
$P_3$: $m_3$, $m_4$, $m_1$,$m_2$\\
$P_4$: $m_4$, $m_1$, $m_2$,$m_3$\\

and

~\\
$P_4$: $m_5$, $m_6$, $m_7$,$m_8$\\
$P_3$: $m_6$, $m_7$, $m_8$,$m_5$\\
$P_2$: $m_7$, $m_8$, $m_5$,$m_6$\\
$P_1$: $m_8$, $m_5$, $m_6$,$m_7$\\

Both schedules area split into three segments as shown below:

\begin{table}[H]
\centering
\begin{tabular}{l|lll|llll|l}
     & $A_1$ &          &          & $A_2$ &          &          &          & $A_3$  \\ 
\hline
$P_1$ & $m_1$ & $m_2$ & $m_3$ &          &          &          &          & $m_4$  \\
$P_2$ &  & $m_2$ & $m_3$         & $m_4$ & $m_1$ &          &          &           \\
$P_3$ &  &          &    $m_3$      & $m_4$ & $m_1$ &   $m_2$       &          &           \\
$P_4$ &          &          &          & $m_4$ & $m_1$ & $m_2$ & $m_3$ &          
\end{tabular}
\end{table}

\begin{table}[H]
\centering
\begin{tabular}{l|lll|llll|l}
     & $B_1$ &          &          & $B_2$ &          &          &          & $B_3$  \\ 
\hline
$P_1$ &            &  &  &  $m_8$        & $m_5$         & $m_6$         &  $m_7$        &  \\
$P_2$ &            &  & $m_7$ & $m_8$ &   $m_5$       & $m_6$         &           \\
$P_3$ &   & $m_6$      & $m_7$         & $m_8$ & $m_5$ &          &          &           \\
$P_4$ & $m_5$ & $m_6$      & $m_7$  & & &  & &     $m_8$     
\end{tabular}
\end{table}

We now link those two schedules to one combined schedule with the segment order $A_1$, $B_1$, $A_2$, $B_2$, $A_3$, $B_3$.

By the design of schedules A and B, to achieve fairness, $m_1$, $m_2$, $m_3$, and $m_4$ must be in the same block. The same holds for $m_5$, $m_6$, $m_7$, and $m_8$. 
The argument for this is equivalent to the previous proof; as it is not known to the honest parties who is honest and who not, the requirement could
imply that $m_1$ has been seen by all honest parties before $m_2$ (if $P_4$ is corrupt), $m_2$ before $m_3$, $m_3$ before $m_4$, and $m_4$ before $m_1$.
Thus, all those messages need to be scheduled in the same block.

In the combined schedule, we also have all honest parties see $m_7$ before $m_4$. Thus, $m_7$ must be scheduled in the same or an earlier block than $m_4$.
Similarly, $m_3$ needs to be in the same or an earlier block than $m_8$.  As $m_7$ and $m_8$ and respectively $m_3$ and $m_4$ must be in the same block, this means all messages have to be scheduled in the same block.

If we combine the segments the other way around, i.e., $B_1$, $A_1$, $B_2$, $A_2$, $B_3$, $A_3$, we get the same result: $m_7$ is seen by all parties before $m_4$, and $m_3$ is seen
by all parties before $m_8$, meaning that still both segments need to be in the same block.

We can now repeat this construction. Suppose we have segment $C$ in the same structure as segment $A$, and segment $D$ in the same structure as segment $B$. Then consider the schedule

$A_1$, $B_1$, $A_2$ ,$C_1$, $B_2$, $A_3$, $C_2$, $B_3$, $C_3$

By above argument, all messages in $A$ and $B$ need to be in the same block; the addition of the messages from segment $C$ does not affect the argument. Similarily, all messages in $B$ and $C$ need to be
in the same block; this is unaffected by $A$. In the same way, we can add $D$ in a way that it needs to be in the same segment as $C$:

$A_1$, $B_1$, $A_2$ ,$C_1$, $B_2$, $A_3$, $D_1$,  $C_2$, $B_3$, $D_2$,$C_3$,$D_3$

This construction can be arbitrarily repeated, leading to an infinite sequence of messages that all need to be in the same block.
\end{proof-sketch}

A notable property of our result is that we do not need a corrupted party to actually act in any bad way -- it is enough that there is some party that has the label 'corrupt', and noone knows
which one it is.  While we haven't worked out the proof, it is likely even impossible if we only require fairness if noone actually is corrupt. To assure liveness in an asynchronous system, 
the protocol still needs to progress on $n-t$ inputs, which means it misses some information that might be relevant to define a valid order.  We did at this point not investigate
further, as we prefer to have a protocol that offers somewhat weaker fairness, but maintains robustness in the face of a byzantine adversary.

There are subtle differences in the underlying model that impact what the construction actually means. In some models -- essentially the cryptographically sound ones that assume
a polynomial time bound adversary~\cite{CKPS01,cryptoeprint:2020:269}-  one assumes that the number of incoming (and adversary generated) requests is somehow bounded, i.e., at some point the protocol
terminates for good. In this model, our construction does not strictly violate liveness -- what happens is that, to satisfy fairness, all requests will be delivered in the one and only
block the protocol ever schedules just prior to termination. For those models, we do not prove impossibility of relative block fairness, but impossibility of any meaningful
efficiency guarantees -- in the worst case, relative fairness is reached by treating all parties equally bad.
If we assume a model that allows for infinite protocol runs, the last point in time does not exist, and a protocol cannot guarantee to deliver anything.

The other interesting modeling aspect is the amount of asynchrony required. In the schedule above, 
once we start interleaving the D-blocks, all messages in the A-block have been seen by all honest parties. 
This implies that we do not need a fully asynchronous system. For a consensus between $n$ parties, 
if $\delta t_r$ is the time interval between the first honest party becoming aware of a request $r$ and
the last honest party doing so, then the adversary needs to show honest parties less than $3n$ other requests during $\delta t_r$.
Thus, our construction is also possible in most synchronous systems, as long as the adversary can generate/access sufficient requests in the
given time-span and has the power to freely determine a schedule in which an honest party sees any set of $3n$ consecutive requests.

Thus,  if we bound the number of requests the adversary is allowed to show to honest parties in between the
times when the first honest party saw a particular request 
and the last honest party saw it, the impossibility result still holds.

\begin{theorem}
There exists a schedule such that, to achieve relative block fairness,  all requests any honest party ever seen need to be scheduled in the same block. Consequently,
no block can be delivered with this schedule while new requests can be generated.

Furthermore, once an honest party has seen a request r, the schedule requires  less than $3n$ other requests to be operated on until the last honest party sees $r$.
Thus, an infinite schedule can also be generated in a partialy synchronous model.
\end{theorem}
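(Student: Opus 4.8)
The plan is to isolate two reusable building blocks from the construction and then drive an induction. The first block is a \emph{cyclic segment}: a set of $n$ requests presented to the $n$ parties in the rotated orders used in the first impossibility sketch, so that for each $j$ exactly one party sees $m_j$ before its cyclic predecessor. By the argument already established there, relative block fairness forces all $n$ requests of such a segment into a single block. The observation I would stress is that this conclusion is \emph{robust to interleaving}: if we splice requests belonging to other, disjoint segments between the requests of a cyclic segment, the relative order of that segment's own requests inside every party's local view is unchanged, so the same-block verdict survives. This robustness is what makes the induction possible, and is worth stating as a small lemma before anything else.

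The second block is the \emph{linking} gadget. Here I would take two cyclic segments $A$ and $B$, with the rotation roles reversed between them, split each into the three sub-segments $A_1,A_2,A_3$ and $B_1,B_2,B_3$ exactly as in the tables, and interleave them as $A_1,B_1,A_2,B_2,A_3,B_3$. I would then verify the two \emph{cross constraints}: that every party sees a particular request of $B$ (namely $m_7$) before a particular request of $A$ (namely $m_4$), and simultaneously a particular request of $A$ (namely $m_3$) before a particular request of $B$ (namely $m_8$). Combined with the within-segment verdicts from the first block --- so that $m_3,m_4$ share a block and $m_7,m_8$ share a block --- the two cross constraints sandwich the block index of $A$ between that of $B$ and vice versa, forcing $A$ and $B$ into one and the same block. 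The only thing to check carefully is that the interleaving yields a single consistent linear order for each party, and that the two cross constraints genuinely hold for \emph{all} parties; this is a finite, table-driven verification.

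With these two blocks in hand I would run the induction. Starting from $A,B$ I add a fresh cyclic segment $C$ (structured like $A$) linked to $B$, then $D$ (structured like $B$) linked to $C$, and so on, using the interleaved orders $A_1,B_1,A_2,C_1,B_2,A_3,C_2,B_3,C_3$, then $A_1,B_1,A_2,C_1,B_2,A_3,D_1,C_2,B_3,D_2,C_3,D_3$, and so forth. The crucial point --- and the main obstacle --- is \emph{locality}: inserting segment $k{+}1$ must neither disturb the within-segment rotation constraints of any earlier segment nor the cross constraints of any earlier linked pair. I would discharge this by arguing that each newly inserted sub-segment only adds requests disjoint from the affected pairs and never reorders, in any party's view, the two specific requests that carry an existing constraint; hence every previously derived ``same block'' relation persists. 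Since consecutive segments are forced together and ``same block'' is transitive, all segments --- an unbounded supply of them --- collapse into one block, so no block can ever be output while fresh requests keep arriving, which is the first assertion.

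For the second assertion I would bound the asynchrony the construction needs by counting, for any fixed request $r$, how many distinct other requests are operated on between the moment the first honest party sees $r$ and the moment the last honest party sees $r$. Because each segment contains exactly $n$ requests, and the interleaving pattern confines the sub-segments carrying $r$'s constraints to $r$'s own segment and the two segments adjacent to it in the chain, this window spans at most a constant number of segments' worth of requests, which I would show is strictly fewer than $3n$. It follows that the adversary never needs to keep a request in flight longer than the time it takes honest parties to observe fewer than $3n$ other requests; whenever the model lets the adversary present any $3n$ consecutive requests in a chosen order within that delay --- in particular a partially synchronous model with a sufficient request supply --- the same schedule is realizable, so the impossibility is not an artifact of full asynchrony.
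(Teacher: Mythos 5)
Your proposal is correct and takes essentially the same route as the paper's own proof: the same cyclic segments, the same split into sub-segments interleaved as $A_1,B_1,A_2,B_2,A_3,B_3$, the same two cross constraints ($m_7$ seen before $m_4$ and $m_3$ seen before $m_8$ by every party), the same iterated insertion of fresh segments $C,D,\dots$, and the same counting of at most a few segments' worth of requests to obtain the $3n$ bound for partial synchrony. The only difference is that you state explicitly the robustness-to-interleaving and locality lemmas that the paper treats as self-evident ("the addition of the messages from segment $C$ does not affect the argument"), which is a tightening of the same argument rather than a different approach.
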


\section{Circumventing the impossibility}

We first show a protocol that can guarantee fairness, but does not overcome the liveness issues mentioned above, i.e.,
it is possible for an adversary to prevent termination. For the ease of description, we describe a somewhat wasteful version of the 
protocol which resends all requests that did not make it into a block for the next block; in a real implementation, this would be handled 
in a more efficient way. Also, the protocol as described is sending a lot of signatures repeatedly; that, too, can be optimized in an 
implementation version.

We describe our protocol as a pre-protocol  to the atomic broadcast. The pre-protocol generates a proposal for a block that can then be proposed as the next block for the atomic 
broadcast protocol, alongside validation information that allow verifying that the block was properly generated.
To this end, we assume an atomic broadcast protocol following the definition of~\cite{CKPS01}. In addition to needing an external validity property, i.e.,  
there is a validation function such that an honest party only accepts an output $r$ with added validation information if the verification function holds.
be one party, or every party intending to construct a valid proposal. For simplicity, we also assume that the protocol is re-invoked upon termination by the 
atomic broadcast protocol, and that the framework assures that messages linked to undelivered requests are replayed to the next incarnation of the pre-protocol
in the same order, and messages linked to delivered requests are ignored. The reason to structure the protocol this way (rather than having an infinite loop that
invokes the atomic broadcast protocol and taking care of messages itself) lies in the modular architecture we want to allow - the fairness pre-protocol is an optional 
add-on to the atomic broadcast, and thus should be a pre-protocol invoked by the atomic broadcast rather than the other way around, and it must be possible for
one atomic broadcast protocol to use different pre-protocols for different markets.

One issue with this approach is that fairness in the traditional sense -- if every instance of the pre-protocol terminates, then every request that is seen by all honest party also is delivered (preferably in a bounded time) in some block -- is no longer a property of the pre-protocol, but of the combination. This can however easily be derived from relative fairness  
if we show that every terminated instance of the pre-protocol delivers a non-empty block:
\begin{itemize}
\item By assumption, messages that have not been delivered are treated by the next incarnation of the pre-protocol as if they arrived at the same time in the same order
\item The protocol guarantees progress, i.e., at least one request is delivered into a block on each terminating  incarnation
\item By the relative fairness requirement, for every request that has been seen by all honest parties, there is a finite number of requests that can be scheduled
        in an earlier block.
\end{itemize}.


We say that a request $r$  {\em blocks} another request $r'$ given the current information, it cannot be excluded that $r$ needs to be in the same or an earlier block to achieve
relative block fairness. More precisely, $r$ {\em blocks} $r'$ if $r$ and $r'$ share a market-identifier, and it is not the case that $t+1$ parties
\begin{itemize}
\item have reported to have seen $r$ before $r'$, i.e., assigned it a lower sequence number, or 
\item have reported to have seen $r$ and all requests with a lower sequence number, but not $r'$.
\end{itemize}

\begin{lemma}
If $r'$ does not block $r$, then $r'$ is not required to be in the same or an earlier block than $r$ by the requirements of relative block fairness.
\end{lemma} 
\begin{proof}
To be required to be in the same or an earlier block, all honest parties need to have seen $r'$ before $r$. If $t+1$ parties report to have seen
$r'$ after $r$, at least one of them is honest, and thus not all honest parties have seen $r'$ before $r$.
\end{proof}

\begin{algorithm}\label{prot:weakfairness}
{\bf Widget Neverending Wendy  for block $b$ and protocol instance $ID$}\\
{\bf All parties:}
\begin{small}
\begin{revpar}
\item {\bf let} $i$ be the counter of incoming requests, starting at 0.
\item {\bf while} no valid proposal has been seen as the proposal for atomic broadcast for block $b$  {\bf do}
\begin{revpar}
 \item {\bf for all} known and unscheduled request $r$, in the order of the receiving the requests, send the signed message ($ID$,$b$,$i$,$r$) to all parties, where $i$ is the sequence number of that request.
\end{revpar}
\item {\bf end while}

\end{revpar}
\end{small}
{\bf Additional protocol for the leader(s):}
\begin{small}
\begin{revpar}
\item {$\mathcal{B} = \emptyset$}
\item {\bf wait until } the first request $r$ is contained in the signed and valid votes from $n-t$ parties; add $r$ to $\mathcal{B}$
\item {\bf while} any request $r' \not \in \mathcal{B}$ blocks any other request $r \in \mathcal{B}$,
\begin{revpar}
\item 	{\bf if} request $r'$  has at least $n-t$ votes, add $r'$ to $\mathcal{B}$
\end{revpar}
\item {\bf end while}

\item The proposal for the next block of the atomic broadcast is $\mathcal{B}$, validated by all signed votes for requests in $\mathcal{B}$.
\end{revpar}
\end{small}
\end{algorithm}

The following defines how a valid vote and block look like:
\begin{definition}[Vote-Validity]
A vote is valid if it has the proper format, and once all requests with a lower sequence number from that voter have been received.
\end{definition}

\begin{definition}[Block-Validity]
A block $\mathcal{B}$ is valid if it contains a nonempty  set of requests with $n-t$ valid votes each; a vote for $r$ is valid if it contains the signed votes for all requests for that block with a lower sequence number. Furthermore, for every $r$ in $\mathcal{B}$, if there is a request $r'$ in the vote validation that had at least $t+1$ votes with a lower sequence number than $r$, then $r'$ needs to be in $\mathcal{B}$ accompanied 
by $n-t$ validation votes.
\end{definition}

\begin{theorem}
The protocol {\em Neverending  Fairness} guarantees safety, i.e., if a block is sent to the  atomic broadcast protocol, and there are requests $r$ and $r'$ such 
that all honest parties have seen $r'$ before $r$,  then $r'$ is in the same or an earlier block than $r$.
\end{theorem}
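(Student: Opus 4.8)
The plan is to show that the only way safety can fail is for $r$ to be placed in some proposed block $\mathcal{B}$ while $r'$ is absent from $\mathcal{B}$ and from every earlier block, and then to rule this out. So first I would reduce to the main case $r \in \mathcal{B}$, where $\mathcal{B}$ is the block that is actually sent. If instead $r \notin \mathcal{B}$, then either $r' \in \mathcal{B}$, in which case $r'$ sits in a strictly earlier block and we are done, or both requests are undelivered and, by the assumed replay semantics, the identical argument applies verbatim to the later incarnation that eventually schedules $r$ (the hypothesis that all honest parties saw $r'$ before $r$ is preserved, since replayed requests keep their order). Thus it suffices to prove: if $r \in \mathcal{B}$, then $r' \in \mathcal{B}$.

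Next I would establish that $r'$ blocks $r$. Since all honest parties saw $r'$ before $r$, relative block fairness requires $r'$ to be delivered in the same block as $r$ or earlier; by the contrapositive of the preceding Lemma this can hold only if $r'$ blocks $r$. Equivalently, and without appealing to the Lemma: at most $t$ parties, namely the Byzantine ones, can report having seen $r$ before $r'$, or report having seen $r$ together with all lower-numbered requests but not $r'$, because each of the $n-t$ honest parties assigned $r'$ a lower sequence number than $r$ and has therefore already seen $r'$. Since $t < t+1$, neither bullet of the blocking definition is met, so $r'$ blocks $r$.

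The key step is then to exploit the exit condition of the leader's while-loop. The leader proposes $\mathcal{B}$ only after the loop terminates, and the loop guard is precisely that some request outside $\mathcal{B}$ blocks some request inside $\mathcal{B}$. Hence at the moment $\mathcal{B}$ is proposed, no request outside $\mathcal{B}$ blocks any request inside it. Were $r'$ outside $\mathcal{B}$, then, since $r \in \mathcal{B}$ and $r'$ blocks $r$, the loop guard would still be satisfied and the leader could not yet have proposed $\mathcal{B}$, a contradiction. Therefore $r' \in \mathcal{B}$, so $r'$ lies in the same block as $r$, which settles the main case and, with the reduction above, the theorem.

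I expect the main obstacle to be the interaction between safety and the vote bookkeeping. One must check that the argument does not secretly rely on $r'$ having accumulated $n-t$ valid votes: safety follows purely from the loop's termination condition, which refers only to the blocking relation and is independent of vote counts, so the ``if $r'$ has $n-t$ votes'' test governs only liveness and termination, not safety. I would nonetheless confirm that vote-validity is consistent here, noting that because each honest party gave $r'$ a lower sequence number than $r$, any valid vote for $r$ from an honest party already witnesses a valid vote for $r'$ from that same party; this is what lets the very evidence that justified placing $r$ in $\mathcal{B}$ also justify placing $r'$ there. The remaining point to handle with care is the cross-incarnation reduction, where I would lean on the stated replay assumption to argue that the relative order of $r'$ and $r$, and hence the blocking relation, carries over unchanged into the incarnation that finally schedules $r$.
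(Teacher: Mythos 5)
Your treatment of the honest-leader case is correct and is in fact more careful than the paper's own proof of that case: you explicitly establish that $r'$ blocks $r$ at every point in time (only the at most $t$ Byzantine parties can ever report seeing $r$ before $r'$, or report seeing $r$ and all lower-numbered requests but not $r'$, so the $t+1$ threshold in the blocking definition is never met), you correctly identify the while-loop's exit condition, rather than the $n-t$ vote test, as the carrier of safety, and your cross-incarnation reduction matches the paper's replay assumption.

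There is, however, a genuine gap: your whole argument presupposes that the party assembling $\mathcal{B}$ actually executes the leader code, i.e.\ that the proposer is honest. The theorem quantifies over any block sent to the atomic broadcast protocol, and in this Byzantine setting a corrupt leader need never run the while loop at all, so its termination condition gives you nothing for such a proposer. This is precisely the role of the Block-Validity definition and the external-validity hook of the atomic broadcast, which your proof never invokes; the paper's proof rests on the claim that the validity proof contains the whole history leading to the block, so every \emph{valid} block satisfies relative block fairness, with the dishonest leader's remaining misbehaviors (equivocation, withholding) caught by the atomic broadcast itself. To close the gap, add the following: any block containing $r$ that passes Block-Validity carries $n-t$ valid votes for $r$; at least $n-2t \geq t+1$ of these come from honest parties, each of which (by Vote-Validity, as you already observed) includes the signed vote for $r'$ under a lower sequence number; hence $r'$ appears in the vote validation with at least $t+1$ lower-sequence-number votes, and the last clause of Block-Validity forces $r' \in \mathcal{B}$, no matter who proposed the block. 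With that addition your argument covers corrupt proposers as well and subsumes the paper's proof.
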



\begin{proof}
If the leader is honest, it will place at least one request in $\mathcal B$. By the protocol logic, $\mathcal B$ will be delivered once no request not in $\mathcal B$ blocks any request in $\mathcal B$.

As the validity proof contains all the history that lead to the definition of the block, every valid block has to satisfy the conditions for relative block fairness.
If the leader is dishonest, the only misbehavior (apart from deliberately not terminating the pre-protocol) is to suggest different valid blocks to different
parties. This, however, is easily caught by the atomic broadcast protocol.
Other dishonest parties can report different orders to different leaders (if those exist). This also is caught by the atomic broadcast protocol (which in this case should
select one of those blocks as the next one), as well as requiring contradictory signatures that are then provable exposing the corrupt party.

\end{proof}

\begin{theorem}
If some honest party submits a request $r$, the protocol {\em Neverending Fairness} terminates.
\end{theorem}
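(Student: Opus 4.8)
The plan is to isolate the single place where the honest submission is used -- namely, to guarantee that at least one request ever crosses the $n-t$ vote threshold -- and then to show that the block the leader assembles around such a request is finite.

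First I would establish liveness of the votes. Since $r$ is submitted by an honest party, that party, and by eventual delivery every honest party that later receives $r$, repeatedly sends a signed, valid vote for $r$ in the all-parties loop. With $n=3t+1$ there are exactly $n-t=2t+1$ honest parties, so $r$ eventually collects $n-t$ valid votes. This is precisely the property an adversary can deny when no honest request exists -- it can keep every request below the threshold by exposing it to at most $t$ honest parties -- and it is the only consequence of the honest submission that the proof needs. Hence the leader's initial ``wait until the first request has $n-t$ votes'' step completes and $\mathcal B$ becomes non-empty. I would also record here that, because the leader part is communication-free, every honest party can carry it out, so termination of the instance does not hinge on the designated leader being honest: we may rely on the underlying atomic broadcast's liveness once a valid proposal exists.

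Second, I would bound $\mathcal B$. By eventual delivery each honest party sees the anchor request only after a finite prefix of its local stream, so the union over the $2t+1$ honest parties of these prefixes is a finite set $U$. I would then argue that every request the leader is forced to add lies in $U$: if a request $r'$ were seen after the anchor by all honest parties that vote for it, then once $r'$ has $n-t$ votes at least $t+1$ honest parties report it as later, so by the Lemma $r'$ does not block the anchor and is never added; thus any added request was seen early by at least one honest party. Dually, requests generated late -- after all honest parties have passed the members of $\mathcal B$ -- are eventually reported as later by $t+1$ honest parties and therefore stop blocking anything in $\mathcal B$. Since every request that genuinely blocks is eventually seen by all honest parties and so reaches $n-t$ votes, each pending blocking pair is resolved either by adding the blocker or by ruling it out, and no blocker can trap the leader in a busy-wait for votes that never arrive.

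Finally, once $\mathcal B$ has stabilized to a finite, non-empty set with no outstanding blocker, the leader's while-guard fails and it emits the valid proposal $\mathcal B$ with its votes; feeding this to the atomic broadcast and invoking that protocol's liveness delivers a valid block for position $b$, whereupon every honest party observes a valid proposal and exits the outer ``while no valid proposal has been seen'' loop, so the instance terminates. The hard part, where I expect to spend the most care, is the finiteness argument of the second step: I must make the bound on $\mathcal B$ robust under the \emph{transitive} blocking relation (a blocker of a blocker need not sit in the prefix of the original anchor, so a naive $U$ defined only from the anchor is too small), and I must exclude the degenerate busy-wait in which a request perpetually blocks some member of $\mathcal B$ yet never accumulates $n-t$ votes. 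Pinning down that ``genuinely blocks'' forces ``seen early by enough honest parties, hence eventually $n-t$-voted,'' while ``seen late by all honest parties'' forces ``eventually ruled out,'' and that these two cases are exhaustive and mutually exclusive, is the crux of the proof.
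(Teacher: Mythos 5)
The theorem you set out to prove is false, and the paper knows it: its entire ``proof'' reads ``Void, because the theorem is wrong.'' The protocol is called \emph{Neverending} Wendy precisely because an adversary can prevent it from ever terminating, and the paper's own impossibility theorem (the interleaved-schedule construction) is exactly the instrument that defeats your argument. Immediately after the void proof, the paper states the consequence: an adversary with sufficient influence on message ordering can force an arbitrary number of requests into the same block, and hence keep the protocol working on one block forever.

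The fatal gap sits exactly where you flagged the crux: the finiteness of $\mathcal{B}$ under the transitive blocking relation. Your set $U$ (the union of the honest parties' prefixes before the anchor) does not contain every request the leader is forced to add, and no finite set does. In the fully asynchronous model the adversary controls the order in which each honest party sees each request, and since the paper assumes all traders are under adversarial control, the supply of fresh requests is unbounded. The segment construction from the impossibility proof (interleaving $A_1, B_1, A_2, C_1, B_2, A_3, D_1, C_2, B_3, \dots$) yields an infinite schedule in which every newly injected request is shown to the honest parties in a cyclic pattern, so that it (i) eventually collects $n-t$ valid votes, and (ii) blocks some request already in $\mathcal{B}$, because for some choice of which party is corrupt, fairness would require it to be in the same block. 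The leader's while-loop must therefore add it, whereupon the next wave of requests blocks that one, and the loop never exits. Your dual claim --- that requests generated late are ``eventually reported as later by $t+1$ honest parties and therefore stop blocking'' --- is precisely what the adversarial schedule refutes: the adversary never lets a new request be seen uniformly late; it threads each one into the honest parties' views early enough (relative to some member of $\mathcal{B}$) to keep the blocking relation alive, while the votes certifying this arrive in an order the adversary also dictates. Your first step (the honest submission guarantees some request reaches $n-t$ votes, so $\mathcal{B}$ becomes non-empty) is correct but beside the point: non-emptiness of $\mathcal{B}$ was never the obstruction; unboundedness of $\mathcal{B}$ is, and it is unavoidable --- which is why the paper proves only safety for this protocol and recovers liveness in the timed and hybrid variants by weakening the fairness definition.
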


\begin{proof}
Void, because the theorem is wrong.
\end {proof}

As we have shown in the previous section, it is possible for an adversary to construct a schedule in which an arbitrary amount of messages needs to be put into the same block;
thus, an adversary with sufficient influence on message ordering can keep the protocol process one block forever.

Consequently, we also cannot quantify the absolute fairness -- once a request is seen by all honest parties, there is no upper bound on when it is delivered. The only statement we can make is
about the block it will be contained in (which depends on the number of undelivered earlier requests), but not on the time or communication effort until that block 
is delivered.

\begin{lemma}
If $r'$ does not block $r$, then $r'$ is not required to be in the same or an earlier block than $r$ by the requirements of relative block fairness.
\end{lemma} 
\begin{proof}
To be required to be in the same or an earlier block, all honest parties need to have seen $r'$ before $r$. If $t+1$ parties report to have seen
$r'$ after $r$, at least one of them is honest, and thus not all honest parties have seen $r'$ before $r$.
\end{proof}

\subsection{Armageddon}

If the protocol terminates due to lack of usage (i.e., there are no more requests to be scheduled), then the impossibility result no longer holds -- in the worst
case scenario, the protocol only schedules one block after the genesis block which then contains all transactions (one could argue that such a behavior may hasten 
the end-of-time scenario as users abandon the system). What is left to show is that all requests that an honest party has seen actually are delivered. This model also 
assumes that the adversary cannot keep the protocol running forever by generating its own transactions. This would usually be the case as (a) forever is a very long time 
and a concept that doesn't exit in a cryptographically strict model,
(b) usually transactions cost money to incentivise the validators, so such an adversary would spent an unlimited amount of money to prevent protocol termination.

If the protocol terminates while still in operation due to validators opting out, a weaker form of liveness is required -- while the protocol should 
have created all the blocks it could before, it cannot be expected to deliver every single request in that setting. While we do not quantify which messages can 
get lost under these conditions,~\cite{cryptoeprint:2020:269} provides the formalism to cleanly define such end-time scenarios.

\subsection{Relative Synchrony Assumption}

One reason why the impossibility result works is that we allow the adversary to completely control the schedule, i.e., the order in which all parties see all requests. This is an unrealistically strong
adversary; it is usually defined that way as it is rather hard to model a realistic worst case network attack. In the following, we define an adversary who is almost that strong, but has a (small) 
failure probability.
For this definition, we assume that there is some form of global time, which is unknown to the individual parties.


\begin{definition}[Probabilistic Adversary Failures]
After every time the adversary delivers a message, all undelivered messages between honest parties, in a random order,
are each delivered with a probability $p$. If as a result of such a message an honest party generated another message,
that message is added to the pool of messages to be delivered with probability $p$ at a random position.
\end{definition}

While this definition invalidates the impossibility result and allows for an algorithm to achieve relative fairness, we still run into practical issues. If $p$ is unknown (analogous to the failure detectors, 
where it is unknown when a party is rightfully suspected), then we have no known upper bound for the block size and, relatedly, latency. Even if $p$ is known, the maximum possible
blocksize can be prohibitively large for any practical implementation. In addition, the adversary can improve the schedule shown in the previous section to add more resilience. For example, the
adversary could (using twice as many transactions) interweave two such schedules in parallel, and thus tolerate a delivery error in one schedule; to force termination, dilvery errors need to
affect both schedules within a short time, which would then happen only with probability $p^2$. If the adversary has enough messages to operate with, the resilience can thus be arbitrarily high.

While this model is probably pretty close to reality in that a realistic adversry will not have complete control over message delivery for a very long time, it is also unsatisfying in that
$p$ is extremely hard to determine (and probably not the same for all messages and not independent for each message).Furthermore, a more detailed analysis would have to be made
on how an attacker can create even more error- resilient schedules with fewer messages, i.e., how many delivery failures need to co-incide to terminate the protocol.
 Thus, while we can show termination within this model, more work is required to refine the model to the point that we can also make qualitative statements on expected block sizes and latency.

Note that this definition also adds enough synchrony to allow for deterministic byzantine agreement, as the adversary will (eventually) fail to prevent termination.

It remains an open question how much synchrony (in terms of limited message delay) would be needed to circumvent our impossibility result. While we expect that simply
having known timeouts is not sufficient -- our construction only requires requests to be seen in a bad order relative to each other, and also works if all parties see a given
request within a limited time interval -- the exact benefit of various synchrony assumptions are still open work (for some further work on this, see Kelkar,Zhang, Goldfeder, and Juels~\cite{cryptoeprint:2020:269}).

\subsection{Probabilistic Relative Block Fairness}
\begin{definition}[Probabilistic Relative Block Fairness]
A byzantine fault tolerant total ordering protocol is called {\em probabilistically relatively block fair} if the following holds: There is a fixed probability $p$ such that,  if all honest parties receive request 
$r_1$  before request $r_2$, then $r_1$ is delivered in the same block as $r_2$ or earlier for with at least probability $p$.
\end{definition}

This definition allows a protocol to at some point stop assuring fairness and put the already processed messages into the next block, 
even if that means that some messages are scheduled unfairly. To achieve termination at sacrificing some level of fairness, we can set a 
threshold $r_{max}$ and artificially terminate the protocol once the number of requests in $R$ exceeds $r_{max}$. This means that an 
adversary with sufficient network control can cause a limited amount of unfairness (i.e. scheduling some requests out of a fair order), 
however, the majority of all requests will be scheduled fairly, and causing an unfair order does require a very high level of network control for the adversary. 
Of course, the cut-off point can also be defined using other factors, e.g., a timeout, the number of requests in the queue, etc.


%



We can strengthen this approach by adding a random factor. In that setting, once $r_{max}$ is exceeded, we use a common coin~\cite{DBLP:journals/joc/CachinKS05} to determine when the protocol stops. This could
be done in a way that the result is unpredictable even for the leader - after each request added to $B$ beyond $r_{max}$, the leader can request a coin from all other parties defining
whether or not she should stop at that point.  Thus while an adversary with extensive network control can cause an unfair scheduling, she has no influence on who is treated unfairly.
Communication overhead can be managed by piggybacking the coin shares to the voting messages; furthermore, as the attacker gains little apart from a small slowdown of the protocol, 
one could hope that most economic attackers would not attempt such an attack, and thus in most cases the protocol terminates before reaching $r_{max}$.
While this allows the timing model to remain unchanged, the required maximum blocksize is linked to $p$; if $p$ is to be very small (e.g., one in a million), the 
number of messages per block that the protocol needs to be capable of handling is correspondingly high.
 
\subsection{Fairness using Local Clocks}

We now present a different definition of fairness that is slightly weaker, but that allows for much stronger liveness guarantees.

\begin{definition}[Timed Relative Fairness]
Suppose that all parties have access to a local clock. If there is a time $\tau$ such that all honest parties saw (according to their local clock) request $r$ before $\tau$ and request $r'$ after $\tau$, 
then $r$ must be scheduled before $r'$.
\end{definition}

\nocite{DBLP:journals/corr/abs-1904-05234}
Note that there is no need for the local clocks to be synchronized at all; the only formal  requirement is that the clock always counts forward and that no two timestamps are the same.  Obviously, the definition does make more practical sense if the clocks are roughly in sync. 
Using GPS as a time source with a hardening layer to prevent GPS spoofing (e.g., ~\cite{DBLP:journals/ijccbs/BondavalliBCFV13}) and robust syncronisation protocols~\cite{cryptoeprint:2019:838} should be more than sufficient to make this approach practical.

For our protocol, it is sufficient to assure that if $r$ needs to be scheduled before $r'$, $r$ is in an earlier or the same block. As the timestamps are included in the block, 
the ordering of requests inside a block can be performed locally after the block is delivered.

\begin{algorithm}
{\bf Widget Clocked-Wendy for block $b$ and protocol  instance  ID}\\
{\bf All parties:}
\begin{small}
\begin{revpar}
\item {\bf let} $i$ be a counter for incoming requests, starting at 0
\item {\bf while} no valid proposal has been seen as the proposal for atomic broadcast for block $b$  {\bf do}
\begin{revpar}
 \item {\bf for all} known and unscheduled requests $\hat r$, in the order of the timestamps on the requests, send the message   ($ID$,$b$,$i$, timestamp($\hat r$),$\hat r$) to all parties, where $i$ is the sequence number of that request.

\end{revpar}
\item {\bf end while}

\end{revpar}
\end{small}
{\bf Additional protocol for the leader(s):}
\begin{small}
\begin{revpar}
\item {$\mathcal{B} = \emptyset$}
\item {\bf wait until } the first requests $r$ is contained in the signed list of $n-t$ validators; add $r$ to $\mathcal{B}$
\item {\bf let}  $\mathcal{R}$ be the set of requests for which a vote for with a smaller timestamp than $r$ was received
\begin{revpar}
 \item 	{\bf wait until} there is a set of $n-t$ parties from which valid votes for all requests in $\mathcal{R}$ are received
\end{revpar}
 {\bf for all} $r' \in \mathcal{R}$, if timestamps of $t+1$ votes are smaller for $r'$ than the median of the timestamp of the votes for $r$, add $r'$ to $\mathcal{B}$

\item The proposal for the next block of the atomic broadcast is $\mathcal{B}$, validated be the corresponding signed votes in $\mathcal B$
\end{revpar}
\end{small}
\end{algorithm}

Since the fairness condition changed, the validity of a vote and of a block also look different.

\begin{definition}[Timestamped Vote-Validity]
A vote is valid if it has the proper format, and if the sequence number matches the sequence on timestamps on requests from that party. Once a party mismatches 
the timestamps and the sequence numbers, i.e., there are two requests $r_1$ and $r_2$ such that $r_1$ has a lower sequence number and a higher timestamp than $r_2$, 
this and all following votes from that party are considered invalid. Furthermore, a vote is only considered valid once all requests with a lower sequence number from that voter have been received.
\end{definition}

\begin{definition}[Timestamped Block-Validity]
A block $B$ is valid if it contains a nonempty  set of requests with $n-t$ valid votes each; a vote for $r$ is valid if it contains the signed votes for all requests for that block with a lower sequence number. Furthermore, for every $r$ in $\mathcal{B}$, if there is a request $r'$ in the vote validation that obtained $t+1$ votes with a lower sequence number than $r$, then $r'$ needs to be in $\mathcal{B}$ accompanied 
by $n-t$ validation votes.
\end{definition}

\begin{theorem} (Safety)
If a request $r$ is scheduled in a block $\mathcal{B}$, and there is a request $r'$such that there is a time $\tau$ in a way that all honest parties saw $r'$ before $\tau$ and $r$ after $\tau$, then 
$r'$ is in $\mathcal{B}$ or an earlier block.
\end{theorem}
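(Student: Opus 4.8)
The plan is to reduce safety to the block-validity condition: any block ``sent to the atomic broadcast protocol'' is by definition valid, so it suffices to show that every valid block $\mathcal{B}$ containing $r$ must also contain $r'$ (or else $r'$ has already been delivered in an earlier block). Crucially I would argue from Timestamped Block-Validity rather than from the leader's code; the leader's median rule only governs what an \emph{honest} leader chooses to include and is needed for liveness, whereas safety must hold for \emph{any} valid block, including one proposed by a dishonest leader.

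First I would translate the global-time hypothesis into a statement about individual votes. Fix any honest party $P_i$. By assumption it saw $r'$ at a global time before $\tau$ and $r$ at a global time after $\tau$, hence it observed $r'$ strictly before $r$. Since the local clock only counts forward (Timed Relative Fairness), this yields $\mathrm{timestamp}_i(r') < \mathrm{timestamp}_i(r)$, and since each party emits its votes in increasing timestamp order with strictly increasing sequence numbers, $P_i$ assigns $r'$ a strictly smaller sequence number than $r$. Such an honest vote is moreover never discarded by the mismatch clause of Timestamped Vote-Validity, because the honest ordering of timestamps and of sequence numbers agree by construction.

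Next I would count honest votes. As $r \in \mathcal{B}$ and $\mathcal{B}$ is valid, $r$ carries $n-t$ valid votes; with $n = 3t+1$, any set of $n-t = 2t+1$ voters contains at least $t+1$ honest parties. By Timestamped Vote-Validity a vote is counted only once all lower-sequence requests from the same voter have been received, so each of these $\geq t+1$ honest votes for $r$ is accompanied in the validation data by that voter's vote for $r'$, which (by the first step) carries a smaller sequence number. Hence $r'$ occurs in the vote validation of $\mathcal{B}$ with at least $t+1$ votes at a sequence number below that of $r$.

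Finally I would invoke Timestamped Block-Validity directly: for every request in $\mathcal{B}$ --- here $r$ --- any request appearing in the vote validation with $t+1$ votes at a lower sequence number --- here $r'$ --- must itself lie in $\mathcal{B}$ with $n-t$ validation votes; therefore $r' \in \mathcal{B}$, unless $r'$ had already been delivered in an earlier block and was ignored by this incarnation, which is exactly the remaining case allowed by the statement. The main obstacle I expect is not the counting but the two bridges: (i) passing cleanly from the global ordering ``before/after $\tau$'' to per-party sequence numbers while verifying that the honest votes remain valid, and (ii) the ``or an earlier block'' bookkeeping, which relies on the replay framework re-presenting an undelivered $r'$ in order to the next incarnation, so that an already-delivered $r'$ is genuinely the earlier-block case and not a violation.
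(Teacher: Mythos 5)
Your proof is sound, but it takes a genuinely different route from the paper's. The paper argues by contradiction on timestamps and medians, never touching sequence numbers or the validity definitions: it sets $\tau_1$ to the median timestamp among the $n-t$ votes validating the in-block request, notes that membership in $\mathcal{B}$ gives at least $t+1$ votes with timestamps at most $\tau_1$, that exclusion of the other request gives at most $t$ votes for it below $\tau_1$ (this is exactly the leader's median rule), and that the fairness hypothesis supplies a threshold $\tau_2$ with at least $t+1$ votes for the excluded request below $\tau_2$ but at most $t$ votes for the in-block request below $\tau_2$; counting then forces both $\tau_1 < \tau_2$ and $\tau_2 < \tau_1$, a contradiction. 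What your route buys: the paper's second ingredient is justified only by the code an honest leader runs, so its proof implicitly assumes the proposer obeyed the median rule, whereas safety must hold for any block that passes the validity check; your argument via Timestamped Vote-/Block-Validity applies verbatim to a block assembled by a dishonest proposer, which is the right standard for a safety claim, and you flag this distinction explicitly. What the paper's route buys in exchange: it certifies that the particular median threshold in the leader's code is fairness-preserving, which is the fact the termination argument and the later latency optimizations rely on, and it does not lean on the per-voter reading of the sequence-number clause in Timestamped Block-Validity nor on the replay bookkeeping for the ``earlier block'' case --- the two bridges you yourself identify as the delicate points of your argument, and which are only informally specified in the paper.
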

\begin{proof}

Assume without loss of generality that every timestamp has a unique time. This can easily be assured locally by a high 
enough time resolution, and by ordering votes by party identifier if two votes have the exact same timestamp.

Suppose at the end of the pre-protocol, we have request $r' \in \mathcal{B}$ and $l \not \in \mathcal{B}$, and that $l$ has not been scheduled in an earlier block. Let $\tau_1$ be the median of the
timestamps of $r$. 

\begin{enumerate}
\item As $r' \in \mathcal{B}$, at least $t+1$ parties timestamped $r'$ before or during $\tau_1$
\item As $l \not \in \mathcal{B}$, at most $t$ parties timestamped $l$ before $\tau_1$.
\end{enumerate}
 Suppose by the requirements of timed relative fairness, we have to schedule $l$ before $r'$. As $t+1$ of the parties that issued votes are honest, this implies
that 
\begin{enumerate}
\setcounter{enumi}{2}
\item there exists $\tau_2$ such that $t+1$ votes contain timestamps for $l$ before $\tau_2$, and at most $t$ votes contain timestamps for $r'$ before $\tau_2$.
\end{enumerate}

By (2), at most $t$ timestamps for $l$ are smaller than $\tau_1$, and by (3) at least $t+1$ are smaller than $\tau_2$; thus, $\tau_1$ is smaller than $\tau_2$.
Similarly, for $r'$, by (3) at most $t$ timestamps are smaller than $\tau_2$, by (1)  and at least $t+1$ are smaller or equal to $\tau_1$. Thus, $\tau_2$ is smaller than $\tau_1$.
This is a contradiction, and therefore it is not possible that $l$ needs to be scheduled before $r'$. 
\end{proof}

\begin{theorem} 
 If some honest party sees some request, any honest leader will terminate the protocol with a proposal.
\end{theorem}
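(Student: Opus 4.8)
The plan is to establish termination in two stages: a \emph{bootstrapping} argument showing the leader eventually adds a first request to $\mathcal{B}$, and a \emph{finiteness} argument showing that the set $\mathcal{R}$ on which the leader blocks is finite and eventually stabilizes, so that every wait-condition in the leader's code is satisfied in finite time. Throughout I would rely on the two standing model facts: honest parties resend votes for every known, unscheduled request until a valid proposal appears, and messages between honest parties eventually arrive. The essential preliminary observation I would make explicit is that knowledge propagates by gossip: when an honest party receives a valid vote for a request $r$ it did not yet know, $r$ becomes ``known and unscheduled'' for it, so on its next iteration it votes for $r$ with its own local timestamp. Hence any request seen by a single honest party is eventually voted for by all $2t+1 = n-t$ honest parties.

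For the bootstrapping step, the hypothesis is that some honest party sees some request; by the gossip observation that request propagates to all honest parties and therefore eventually collects $n-t$ honest votes at the leader. Thus the leader's first wait-condition is met and a first request $r$ is placed in $\mathcal{B}$. (The first request to reach $n-t$ votes need not be the one originally seen, but some request does.)

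The crux is showing that $\mathcal{R}$ --- the set of requests a party reports before $r$ --- is finite and eventually fixed, which is exactly where the timestamps deliver what the untimed protocol could not. By Timestamped Vote-Validity a party's timestamps must increase with its sequence numbers, and a vote is invalid the instant this is violated; since local clocks only count forward, once an honest party has timestamped $r$ every request it sees afterwards receives a strictly larger timestamp. Therefore each honest party's ``before-$r$'' prefix is frozen the moment it learns $r$, and by gossip every honest party learns $r$ eventually, so the union of their frozen prefixes is finite. A corrupt party can enlarge $\mathcal{R}$ only by producing valid votes, which forces it to commit to a consistent monotone timestamp/sequence-number history and to actually exhibit each request; in the bounded-request model inherited from~\cite{CKPS01} this adds only finitely many requests. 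Consequently $\mathcal{R}$ converges to a finite set $\mathcal{R}^\star$.

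Finally I would close the liveness loop: by gossip every element of $\mathcal{R}^\star$ is eventually known to all honest parties, and since honest parties send in timestamp order their votes for all of $\mathcal{R}^\star$ are valid, so $n-t$ parties eventually supply valid votes for every request in $\mathcal{R}^\star$, satisfying the leader's second wait-condition. The leader then performs the finite median comparison against the median timestamp of $r$, forms $\mathcal{B}$, and outputs the proposal. The main obstacle I anticipate is precisely the finiteness and stabilization of $\mathcal{R}$: one must argue carefully that corrupt parties cannot keep dribbling in fresh ``before-$r$'' requests forever (this is where the bounded-request assumption, rather than pure asynchrony, does the real work) and that the leader's view of $\mathcal{R}$ stabilizes before finalization, so the two wait-conditions do not race an ever-growing set. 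A secondary subtlety worth resolving first is the precise reading of ``smaller timestamp than $r$'' --- per-voter ordering versus comparison against the median $\tau_1$ used in the safety proof.
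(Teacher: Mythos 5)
Your proposal is correct in substance, and its first and third stages coincide with the paper's proof: the paper likewise bootstraps via gossip (every request seen by an honest party is forwarded to, and voted on by, all honest parties, so some $r$ reaches $n-t$ votes), and likewise closes by noting that the honest leader has seen and forwarded every $r'\in\mathcal{R}$, so each eventually collects $n-t$ valid votes. Where you genuinely diverge is the middle step. The paper reads the \textbf{let} statement as a snapshot: $\mathcal{R}$ is fixed at the instant the first wait fires, and is therefore finite for the trivial reason that the leader has received only finitely many votes by any finite point in time --- no timestamp monotonicity, no stabilization argument, and no bound on the number of requests is needed. You instead read $\mathcal{R}$ as a set that keeps growing as later ``before-$r$'' votes arrive, which forces you to prove stabilization: monotone timestamps freeze each honest party's prefix, and corrupt parties are tamed only by importing the bounded-request (polynomial-time adversary) assumption from~\cite{CKPS01}. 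That assumption is arguably available in the paper's model, so your argument goes through, but it is doing work the paper's proof avoids entirely --- and note an internal redundancy: once you assume the total number of requests is bounded, any set of requests is finite, so your careful honest-party freezing argument becomes superfluous. The ambiguity you flag at the end (per-voter timestamp comparison versus comparison against the median $\tau_1$) is real and would matter under your evolving-$\mathcal{R}$ reading, since a slow-clocked honest party could keep issuing votes below the median after timestamping $r$; under the paper's snapshot reading it is moot, which is further evidence that the snapshot reading is the intended one.
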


\begin{proof}
As every party sends every request it sees for the first time to all other parties,  every request that is seen by some honest party is
seen -- and send to the leader(s) -- by all honest parties. Thus, there is some $r$ that is in the signed list of $n-t$ parties.
Once a leader gets $n-t$ votes for some $r$ for the first time, there is a finite number of requests $r'$ for which the leader received a vote before. 
As the leader has seen this vote and is honest, it also forwarded the $r'$ to all other parties, and thus will receive $n-t$ votes eventually.  
Therefore, the waiting statement always terminates for all requests $r'$. 
\end{proof}

Note: We only need successful termination if an honest leader exists. All atomic broadcast protocols we are aware of either have a single leader which 
is replaced if a liveness problem occurs, or use more than $t$ parties in a leader-like function simultaneously and thus guarantee that there is some
honest leader.

\subsection{Optimizations}

The two protocols described above can also be combined. The joint protocol would act like the neverending protocol up until $r_{max}$; however, instead of aborting the
protocol and allowing for plain unfairness, it switches to the weaker timed definition of fairness once $r_{max}$ is exceeded. That approach allows for much more
aggressive thresholds, as the fallback protocol is no longer unfair, but still fair with a slightly weaker definition.

\subsubsection{Latency and performance impact}

Introducing any kind of relative fairness always has a latency impact. If no fairness is required, every incoming request can be processed as soon as it arrives. Relative
fairness, no matter how it is defined,  requires leaders to wait until they can decide if there are other requests with a higher priority. While the {\em mostly fair} protocol allows  parameterisation of  the trade-off between latency and unfairness -- the lower the cutoff parameter, the faster the worst case  protocol and the easier for an adversary to cause an unfair transaction. However, in the benign
case, the latency overhead should be reasonably small.

One (small) speed increase can be reached by parallelizing the leader part of the protocols. Instead of waiting for the first request to add to $\mathcal B$ and then sticking to it,
the protocol can be run in parallel for all requests that have been reported by enough parties. In that case, the first instance that terminates its {\bf while} condition wins
and defines the next block. It is also possible to cut the threshold in the neverending fairness protocol to $t+1$ by using a more sophisticated blocking function.

Another parallelization approach would be that the first part of the protocol  where all parties broadcast their orders is permanently performed, independently of the state of the second
phase or the atomic broadcast. Thus, in most cases, once the atomic broadcast starts processing the next block, enough votes should have arrived to terminate the pre-protocol
quite rapidly. This approach also has an interesting impact on the overall architecture -- rather than having a simple API to call the pre-protocol, some part of it now needs to permanently
run in the background. Alternatively, to save overhead, this could also be included as a piggyback in the gossiping protocol.

An additional approach to optimize the Neverending protocol is to allow requests to be removed from $\mathcal B$ again. Recall that a request is added to $\mathcal B$ if it has received $n-t$ votes and still blocks a request already in $\mathcal B$. This is necessary as we can no longer rely on getting more votes concerning this request, and to guarantee progress, this
request now needs to be treated as if we know that it has to be in the same block as the one it blocks. However, as additional votes come in, it is possible that it unblocks again. In this case, 
$r$ and all requests that had been added to $\mathcal B$ due to blocking $r$ can be removed from $\mathcal B$ again, potentially releasing the block earlier.

For the timed protocol, a similar approach can be taken. For this protocol, we have the advantage that for each request $r$, there is a finite number of requests that are blocking it. This
blockage is released either once the corresponding request has $t+1$ timestamps smaller than the median timestamp on $r$ (in which case we know if any other request needs to be
scheduled before $r'$, it also needs to be scheduled before $r$), or if it got $n-t$ timestamps of which at most $t$ are smaller than the median of $r$ (in which case it can and will be scheduled
after $r$). To fully optimize latency, we also need to constantly verify if new incoming votes increase the median of a subset of $n-t$ votes for $r$, as a higher median increases the possibility
that another request can be decided before it obtained $n-t$ votes.

With this modification, we believe that the protocols have optimal latency within our modular architecture, i.e., it is not possible to hand a block over to the atomic broadcast protocol earlier.
 The (informal) argument for the block fairness protocol goes as follows (from the point of view of a leader): 
\begin{itemize}

\item Every request that $r$ got $t+1$ votes gets its own $\mathcal {B}_r$, i.e., a potential block  containing $r$ and all other requests that have to be in the same block as $r$ .By our
fairness condition, we cannot deliver any request that has seen less than $t+1$ votes, as it is possible that another request that is unknown at this point has $n-t$ votes that prioritize it
over $r$ and thus has to be in the same block. Therefore, for every request that can be in the next block, the protocol maintains has a ${\mathcal B}_r$
\item At any point in time, ${\mathcal B}_r$ is minimal; the only requests in ${\mathcal B}_r$ are requests that either have to be in the same block as $r$, or might have to according to the information
        available.
\item   ${\mathcal B}_r$ cannot be finalized while it contains a request $r_1$ that is blocked by another request $r_2$ with less than $t+1$ votes, as $r_2$ might still be blocked by a yet unseen request. 
       Thus, the protocol finalizes ${\mathcal B}_r$ at the earliest possible occasion.
\end{itemize}

A similar argument holds for the timed protocol; again, the protocol maintains a separate ${\mathcal B}'_r$ for all eligible processes, and decides about all other requests at the earliest
opportunity -- either once it is clear that they can to be in the same block, or once enough votes are seen to conclude they don't need to.

If we further want to optimize latency, we could open up the modularity of our approach. Most voting based atomic broadcast protocols start with the leader(s) broadcasting the content of the next block (or a hash thereof). 
Due to the pre-protocol, we already know that $n-t$ parties have seen the content of the requests in that block.  Optimizing the interplay between the fairness pre-protocol, the atomic broadcast, and the underlying gossip/multicast protocol is thus certainly promising, but out of the scope of this paper. It also is possible to integrate our protocol deeper with the blockchain implementation. With some modifications it could, for example, replace the first phase of the ABC protocol from Cachin, Kursawe, Petzold and Shoup~\cite{CKPS01}. As our goal is a modular approach though, we will not follow that path at this point.

\subsubsection{The combined protocol}

There is a set $\mathcal{D}$ of transactions that are ready for the atomic broadcast layer to use. For the ease of presentation, we
assume that the communication layer is aware of $\mathcal D$, and omits any voting messages associated with any transaction in 
$\mathcal D$. Furthermore, there is a queue $\mathcal Q$ with which the protocol communicates with the atomic broadcast. The atomic broadcast
protocol takes the requests in $\mathcal Q$ from one or several leaders, adds a block to the blockchain, and then deletes the scheduled
requests from the queues from all leaders.

This version of the protocol is defined as a permanent service that takes in requests, and outputs blocks for the atomic broadcast protocol.
 

\begin{algorithm}\label{prot:hybridfairness}
{\bf Widget Hybrid-Wendy for protocol instance $ID$}\\
{\bf All parties:}
\begin{small}
\begin{revpar}
\item {\bf let} $i$ be the counter of incoming requests, starting at 0.
\item {\bf while} true  {\bf do}
\begin{revpar}
 \item {\bf for all} first seen and unscheduled requests $\hat r$, in the order of the timestamps on the requests, send the message   ($ID$,$b$,$i$, timestamp($\hat r$),$\hat r$) to all parties, where $i$ is the sequence number of that request.
\end{revpar}
\item {\bf end while}

\end{revpar}
\end{small}
{\bf Additional protocol for the leader(s):}
\begin{small}
\begin{revpar}

\item {\bf while} true  {\bf do}
\begin{revpar}
\item {\bf once } a request $r$ is contained in the signed and valid votes from $t+1$ parties, set  $\mathcal{B}_r$ to $\{r\}$
\item {\bf while} for any $\mathcal{B}_r \neq \emptyset$  any request $r' \not \in \mathcal{B}_r$ blocks a request $r \in \mathcal{B}_r$ {\bf and} there is no $\mathcal{B}_x$ of order $> r_{max}$
\begin{revpar}
\item 	{\bf if} request $r'$  has at least $t+1$ votes, add $r'$ to $\mathcal{B}_r$
\item      {\bf if} a request $r'  \in \mathcal{B}_r, r' \neq r, $ no longer blocks any other request in $\mathcal{B}_r$, remove $r'$ from $\mathcal{B}_r$
\end{revpar}
\item {\bf end while}

\item {\bf for all} $r$ for which no request in  $\mathcal{B}_r$ is blocked by a request $r' \not \in \mathcal{B}_r$, 
\begin{revpar}
\item add  $\mathcal{B}_r$ to the $\mathcal{Q}$, validated by all signed votes for requests in $\mathcal{B}_r$.
\item add all $r' \in \mathcal{B}_r$ to $\mathcal{D}$, and remove them from all sets $\mathcal{B}_x$ 
\end{revpar}

\item {\bf if} there is a $\mathcal{B}_x$ of order $> r_{max}$
\begin{revpar}
\item set all $\mathcal{B}_x = \emptyset$
\item {\bf while} all $\mathcal {B}'_x = \emptyset$
\begin{revpar}
\item {\bf for all } requests $r$  contained in the signed list of $n-t$ validators, set  $\mathcal{B}'_r = \{r\}$
\end{revpar}

\item {\bf for all} $r$ relating to a nonempty $\mathcal{B}'_r$, 
\begin{revpar}
\item {\bf let}  $\mathcal{R}_r$ be the set of requests for which a vote with a smaller timestamp than $r$ was received
\item {\bf let} $m_r$ be the largest median of any set of $n-t$ votes received for $r$
\item {\bf once} for all requests in ${\mathcal R}_r$ $n-t$ valid votes {\bf or} $t+1$ votes with timestamps smaller than $m_r$ are received, 
\begin{revpar}
\item {\bf for all} $r' \in \mathcal{R}_r$, if timestamps of $t+1$ votes are smaller for $r'$ than $m_r$, add $r'$ to $\mathcal{B}_r$
\end{revpar}
\end{revpar}
\item add  $\mathcal{B}'_r$ to $\mathcal{Q}$, validated by all signed votes for requests in $\mathcal{B}'_r$.
\item add all $r' \in \mathcal{B}'_r$ to $\mathcal{D}$, and remove them from all $\mathcal{B}'_x$

\end{revpar}
\item {\bf end if}

\end{revpar}
\item {\bf end while}
\end{revpar}

\end{small}
\end{algorithm}

\subsection{Fairness and Advanced Staking}
While the protocol described above is relatively model-independent, it is described in the classical committee model, i.e., we have $n$ parties with one vote each, up to $t, n \geq 3t+1$ can suffer from byzantine corruptions.
This model translates easily into a stake-based model, where voting power is related to the stake parties have. To allow our results to be applicable for more different staking models,
we consider the hybrid-adversary-structure model~\cite{Kursawe05byzantinefault}. In short, this model generalizes the model by replacing the thresholds by the corresponding properties that are required
to perform the proof; for example, the threshold $t+1$ is replaced by {\em sets of parties of which at least one is honest}, while $n-t$ corresponds to {\em the largest sets of parties we can afford
to wait for without having to rely on potentially corrupt parties}. This allows to not only model weighted votes, but also take into account properties, e.g., requiring more $2/3$ of the stake in more than $2/3$ of a set of defined geographic regions to be honest. In addition, the hybrid model allows a trade-off between crash- and byzantine corruptions, allowing a higher number of overall failures if some of them are
crash-only (which is a more likely scenario in reality). In the proofs for our protocols, the two aforementioned properties are the only properties we need, and -- while working out the details remains future work -- we expect that the proofs can be generalized in a (relatively) straightforward way. Thus, any staking model that can be formulated within the hybrid adversary structures is compatible with the relative-fairness protocols.

Another model of interest is the choice of a random subset of validators, as done for example in Algorand~\cite{10.1145/3132747.3132757}. While we do not expect this to cause a fundamental issue for our protocols,
some care needs to be taken on the interfaces, as the subsets should be somewhat synchronised between the fairness pre-protocol and the atomic broadcast. This, too, will be a subject of future work.



\section{Conclusion}
We have shown that relative fairness is one of the many desirable properties that is impossible to achieve in a byzantine fault tolerant setting. We have mitigated this by providing slightly weaker definitions of what fair is. We have a presented several protocols to achieve relative order fairness with these definitions, as well as a hybrid version that can switch between two levels of fairness to avoid the impossibility result. 
Our protocols are (largely) blockchain agnostic, and can be added to any protocol that provides a known set of validators. Furthermore, our protocols have optimal resiliency in the asynchronous model (i.e., $n \geq 3t+1$) and
optimal latency in terms of message passing rounds within our architectural model.
\bibliography{wendy}
\bibliographystyle{abbrv}

\end{document}